\newcounter{thmc}
\newtheorem{proposition}[thmc]{Proposition}
\global\long\def\trace{\operatorname{Tr}}
\global\long\def\ketbra#1{\ket{#1}\!\bra{#1}}
\global\long\def\ketbraa#1#2{\ket{#1}\!\bra{#2}}
\global\long\def\one{\mathds{1}}
\newcommand{\kommentar}[1]{}
\newcommand{\SR}{\operatorname{SR}}
\newcommand{\SN}{\operatorname{SN}}
\NewDocumentCommand\opti{smmm>{\SplitList{;}}m} {
\begingroup%
\setlength{\belowdisplayskip}{-0.6\baselineskip}%
\IfBooleanTF{#1}{%
    \begin{alignat*}{2}
        & \underset{#3}{\text{#2}} & & #4 \\
        & \text{subject to~~}
        \ProcessList{#5}{ \insertopticonst }
        & &
    \end{alignat*}%
    }{%
    \begin{alignat}{2}
        & \underset{#3}{\text{#2}} & & #4 \\
        & \text{subject to~~}
        \ProcessList{#5}{ \insertopticonst }
        & & \nonumber
    \end{alignat}%
    }%
\endgroup%
}%
\newcommand\insertopticonst[1]{& & #1\\&}
\begin{document}

\title{Construction of efficient Schmidt number
witnesses for\\ high-dimensional quantum states}

\author{Nikolai Wyderka}
\affiliation{Institut für Theoretische Physik III, Heinrich-Heine-Universität Düsseldorf, Universitätsstr. 1, D-40225 Düsseldorf, Germany}

\author{Giovanni Chesi}
\affiliation{Istituto Nazionale di Fisica Nucleare Sezione di Pavia, Via Agostino Bassi 6, I-27100 Pavia, Italy}

\author{Hermann Kampermann}
\affiliation{Institut für Theoretische Physik III, Heinrich-Heine-Universität Düsseldorf, Universitätsstr. 1, D-40225 Düsseldorf, Germany}

\author{Chiara Macchiavello}
\affiliation{Istituto Nazionale di Fisica Nucleare Sezione di Pavia, Via Agostino Bassi 6, I-27100 Pavia, Italy}

\author{Dagmar Bruß}
\affiliation{Institut für Theoretische Physik III, Heinrich-Heine-Universität Düsseldorf, Universitätsstr. 1, D-40225 Düsseldorf, Germany}

\date{\today}

\begin{abstract}
Recent progress in quantum optics has led to setups that are able to prepare high-dimensional quantum states for quantum information processing tasks. As such, it is of importance to benchmark the states generated by these setups in terms of their quantum mechanical properties, such as their Schmidt numbers, i.e., the number of entangled degrees of freedom.

In this paper, we develop an iterative algorithm that finds Schmidt number witnesses tailored to the measurements available in specific experimental setups. We then apply the algorithm to find a witness that requires the measurement of a number of density matrix elements that scales linearly with the local dimension of the system. As a concrete example, we apply our construction method
to an implementation with photonic temporal modes.
\end{abstract}
\maketitle

\section{Introduction}

Over the course of the last decades, different physical platforms have been developed to perform quantum informational tasks. The requirements for such platforms have been formalized at various occasions. One of the most often used list of requirements was formulated by DiVincenzo, and includes among other things the capability of robust state preparation, state manipulation and measurements \cite{divincenzo2000physical}. While most platforms available today are limited to the generation and manipulation of qubits, it has become more and more evident over the last years that higher-dimensional systems can provide advantages in specific tasks including quantum communication \cite{bruss2002optimal, cerf2002security, durt2004security, sheridan2010security, coles2016numerical,ferenczi2012symmetries} and quantum computing (see Ref.~\cite{wang2020qudits} and references therein). However, embedding these additional degrees of freedom in trapped ions or superconducting qubits has been challenging \cite{low2020practical, strauch2011quantum}. In contrast, time-energy degrees of freedom in photonic states can be used to encode very high-dimensional quantum states. Furthermore, using quantum pulse gates allows for robust and precise manipulation and measurement of these signals, and therefore provides a viable candidate for a platform for high-dimensional quantum tasks \cite{brecht2015photon}.

In this paper, we develop theoretical tools that help to characterize the state preparation capabilities of these and similar setups. Recent work focused on the construction of high-dimensional entanglement witnesses for these setups that allow for efficient entanglement detection of the prepared quantum states \cite{sciara2019universal, riccardi2020optimal}. In this work, we construct easily measurable Schmidt number witnesses for such setups. Apart from detecting whether a state is entangled, the Schmidt number of a state carries additional information about the dimensionality of the entanglement. We develop an algorithm to generate witness candidates that use only few of the experimentally available measurement settings.
Consequently, we then apply the algorithm to the measurements available in the setup described in Ref.~\cite{brecht2015photon} and show that the obtained observable indeed is a proper Schmidt number witness, i.e., it certifies the Schmidt number of the generated quantum state. While proposals exist to certify Schmidt numbers with measurements in only two local bases \cite{bavaresco2018measurements}, these ideas do not apply to the setup at hand, as here only certain linear combinations of the entries of the density matrix can be measured. 

The paper is organized as follows. In section~\ref{sec:SNW} we review the notion of Schmidt numbers and Schmidt number witnesses, in section~\ref{sec:algorithm} we develop and explain the algorithm that generates Schmidt number witness candidates. In Section~\ref{sec:setup} we apply the algorithm to obtain a Schmidt number witness using only $\mathcal{O}(d)$ measurement settings to certify the Schmidt number of the generated states, and compare its noise robustness to that of other Schmidt number witnesses. In the Appendix, we provide an explicit construction
for an experiment using photonic temporal modes.

\section{Schmidt numbers and Schmidt number witnesses}\label{sec:SNW}

Throughout this paper, we consider bipartite quantum systems in $\mathbb{C}^d \otimes \mathbb{C}^d$. We denote the set of linear maps from $\mathbb{C}^d$ to itself by $\mathcal{M}_d$.

The Schmidt number of a bipartite quantum state is an entanglement measure that is related to the hardness of generating a quantum state using local operations and classical communication \cite{nielsen1999conditions}. For pure states, it is defined as the number of non-vanishing Schmidt coefficients in the Schmidt decomposition of the state, i.e., for every bipartite quantum state $\ket{\psi}$, written in the computational basis as
\begin{align}
    \ket{\psi} = \sum_{i,j=0}^{d-1} c_{ij} \ket{i}_A\ket{j}_B,
\end{align}
one can find local orthonormnal bases, $\{\ket{\underline{i}}_A\}, \{\ket{\underline{j}}_B\}$ for subsystems $A$ and $B$, respectively, such that
\begin{align}
    \ket{\psi} = \sum_{i=0}^{k-1} \lambda_i \ket{\underline{i}}_A\ket{\underline{i}}_B,
\end{align}
where $\lambda_i \in \mathbb{R}, \lambda_i >0$ and $\sum_{i=0}^{k-1} \lambda_i^2 = 1$. The  $k\leq d$ non-vanishing numbers $\lambda_i$ are called Schmidt coefficients of $\ket{\psi}$, and $k$ is called the Schmidt rank of $\ket{\psi}$, or $\SR(\ket{\psi})$ in short \cite{guhne2009entanglement}. 
In order to generalize this measure to mixed states $\rho$, one uses the convex roof construction of the Schmidt rank, called Schmidt number, or $\SN(\rho)$ \cite{terhal2000schmidt}:
\begin{align}
    \SN(\rho) := \min_{\rho = \sum_i p_i \ketbra{\psi_i}} \max_i \SR(\ket{\psi_i}),
\end{align}
i.e., it is given by the maximal Schmidt rank within a given decomposition of $\rho$, minimized over all decompositions.
Due to the minimization, it is usually hard to calculate the Schmidt number of a given quantum state.

For a bipartite system of dimension $d\times d$, the maximal Schmidt number is given by $d$, and one can define the sets of Schmidt number $k$ as
\begin{align}
    S_k = \{\rho\,:\,\SN(\rho) \leq k\}.
\end{align}
Clearly, $S_k \subset S_{k+1}$, and $S_1$ is the usual set of separable states. The maximally entangled state $\ket{\phi^+} = \frac{1}{\sqrt{d}} \sum_{i=0}^{d-1} \ket{ii}$ is member of $S_d$, but not $S_{d-1}$, and it can be shown that \cite{terhal2000schmidt, horodecki1999reduction}
\begin{align}\label{eq:phipoverlap}
    \braket{\phi^+|\rho_k|\phi^+} \leq \frac kd
\end{align}
for all states $\rho_k \in S_k$.

In order to certify a specific Schmidt number experimentally, it is useful to define an analogon to entanglement witnesses for Schmidt numbers \cite{sanpera2001schmidt}. We say that an observable $W_k$ is a Schmidt-number-$(k+1)$ witness, if
\begin{itemize}
    \item $\trace(W_k \rho_k) \geq 0$ for all $\rho_k \in S_k$,
    \item $\trace(W_k \rho) < 0$ for at least one $\rho$.
\end{itemize}
Thus, whenever one finds in an experiment that $\trace(W_k \rho) < 0$, then $\rho$ must have at least Schmidt number $k+1$. Note that for $k=1$, we recover the usual notion of an entanglement witness \cite{guhne2009entanglement}. An important experimental advantage of using theses witnesses lies in the fact that a negative expectation value certifies a certain Schmidt number for any experimental input state, pure or mixed. 
In order to certify that a given observable is a Schmidt-number-$(k+1)$ witness, it is sufficient to minimize its overlap w.r.t.~pure states in $S_k$ and show that the minimum is non-negative. This can be seen from the fact that the sets $S_k$ are compact, thus, one can find a (potentially mixed) optimal quantum state $\rho_k^\star$ such that $c_k := \min_{\rho_k \in S_k} \trace(W_k \rho_k) = \trace(W_k\rho_k^\star)$. As $\SN(\rho_k^\star) \leq k$, it exhibits a decomposition $\rho_k^\star = \sum_i p_i \ketbra{\psi_i}$ with $\SR(\ket{\psi_i}) \leq k$. Thus, $c_k = \sum_i p_i \braket{\psi_i | W_k | \psi_i} \geq \sum_i p_i c_k = c_k$, implying that the inequality is an equality and each of the pure $\ket{\psi_i}$ achieves the same minimal value of $c_k$.

The observation of Eq.~(\ref{eq:phipoverlap}) can be directly transformed into a Schmidt-number-$(k+1)$ witness via the observable  \cite{sanpera2001schmidt} \begin{align}\label{eq:standardwitness}
    W_k = \one_{d^2} - \frac dk \ketbra{\phi^+}.
\end{align}
We refer to this witness as the standard Schmidt-number witness, and we will compare our constructions to this one in the end.

The problem remains that in general, the minimization over pure states with fixed Schmidt rank remains challenging, and in many cases no analytical solution can be found. This can be remedied by relaxing the optimization slightly.

To that end, note that there exists a characterization of the set $S_k$ in terms of positive maps \cite{terhal2000schmidt}: It holds that $\rho \in S_k$ if and only if $(\one_{d} \otimes \Lambda_k)(\rho) \geq 0$ for all $k$-positive maps $\Lambda_k\,:\,\mathcal{M}_d \rightarrow \mathcal{M}_d$, where $\one_d$ denotes the identity map in dimension $d$. A map $\Lambda_k$ is called $k$-positive, if $\one_{k} \otimes \Lambda_k$ is a positive map.

This characterization is useful, as it allows to define slightly larger sets than $S_k$, which can be characterized with less effort. To that end, we define the generalized reduction map \cite{guhne2009entanglement, terhal2000schmidt}
\begin{align}\label{eq:reductionmap}
    R_p(\rho_A) = \trace(\rho_A)\one_d - p\rho_A,
\end{align}
where $\rho_A \in \mathcal{M}_d$ is a single qudit mixed state. It was shown that $R_p$ is $k$-positive, but not $k+1$-positive, iff $p\in(\frac1{k+1}, \frac1k]$ \cite{terhal2000schmidt, tomiyama1985geometry}. For $p=1$ one recovers the usual reduction map. We use it to define
\begin{align}\label{eq:SkR}
    S_k^R = \{\rho\,:\,(\one_d \otimes R_{\frac1k})(\rho) \geq 0\}.
\end{align}
Using the relation between $k$-positive maps and states of Schmidt number $k$, it is clear that $S_k \subset S_k^R$. Thus, we can use these sets as an outer approximation of $S_k$, as they can be characterized by semi-definite constraints. The general embedding situation is displayed in Fig.~\ref{fig:sets1}.

\begin{figure}
    \centering
    \includegraphics[width=1.0\columnwidth]{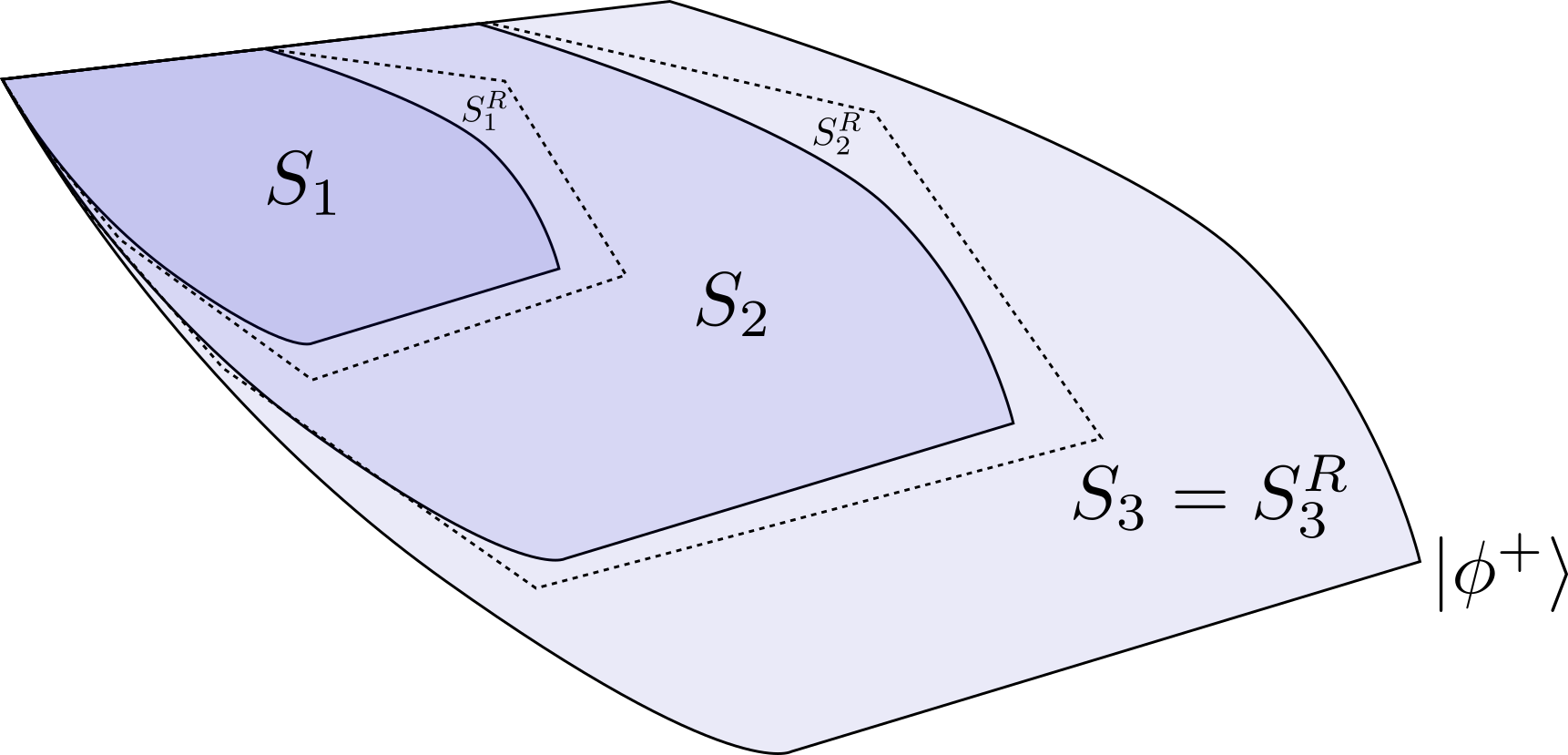}
    \caption{The sets $S_k$ of states of Schmidt number $k$ for $d=3$, as well as their outer approximations $S_k^R$ defined in Eq.~(\ref{eq:SkR}). }
    \label{fig:sets1}
\end{figure}

We use this fact to find a lower bound on the optimal value of $c_k = \min_{\rho_k \in S_k}\trace(W_k \rho_k) \geq \min_{\rho_k \in S_k^R}\trace(W_k \rho_k)$. The latter optimization over $S_k^R$ is in fact a semi-definite program (SDP), that is efficiently solvable on a computer and facilitates further analytical insights by converting it from its primal into its dual form, given by
\opti{max}{y, S\in\mathcal{M}_d}{y\label{eq:dualsdp}}{S^\dagger=S, S\geq0,\nonumber;W_k -(\one \otimes R_{\frac1k})(S)\geq y\one.\nonumber} 
This is equivalent to $\max_{S\geq 0} \lambda_{\text{min}}[W_k - (\one \otimes R_{\frac1k})(S)]$. For this SDP it is possible to show strong duality, meaning that both the primal and the dual optimal values coincide. Strong duality holds thanks to Slater's condition \cite{vandenberghe1996semidefinite}: Both the primal and the dual have full rank feasible points, namely $\rho=\one / d^2$ and $S=\one$. Indeed, any feasible choice of $S$ provides a proper lower bound for $c_k$, allowing one to shift $W_k$ such that it constitutes a proper Schmidt-number-$(k+1)$ witness.

\begin{figure}
    \centering
    \includegraphics[width=0.7\columnwidth]{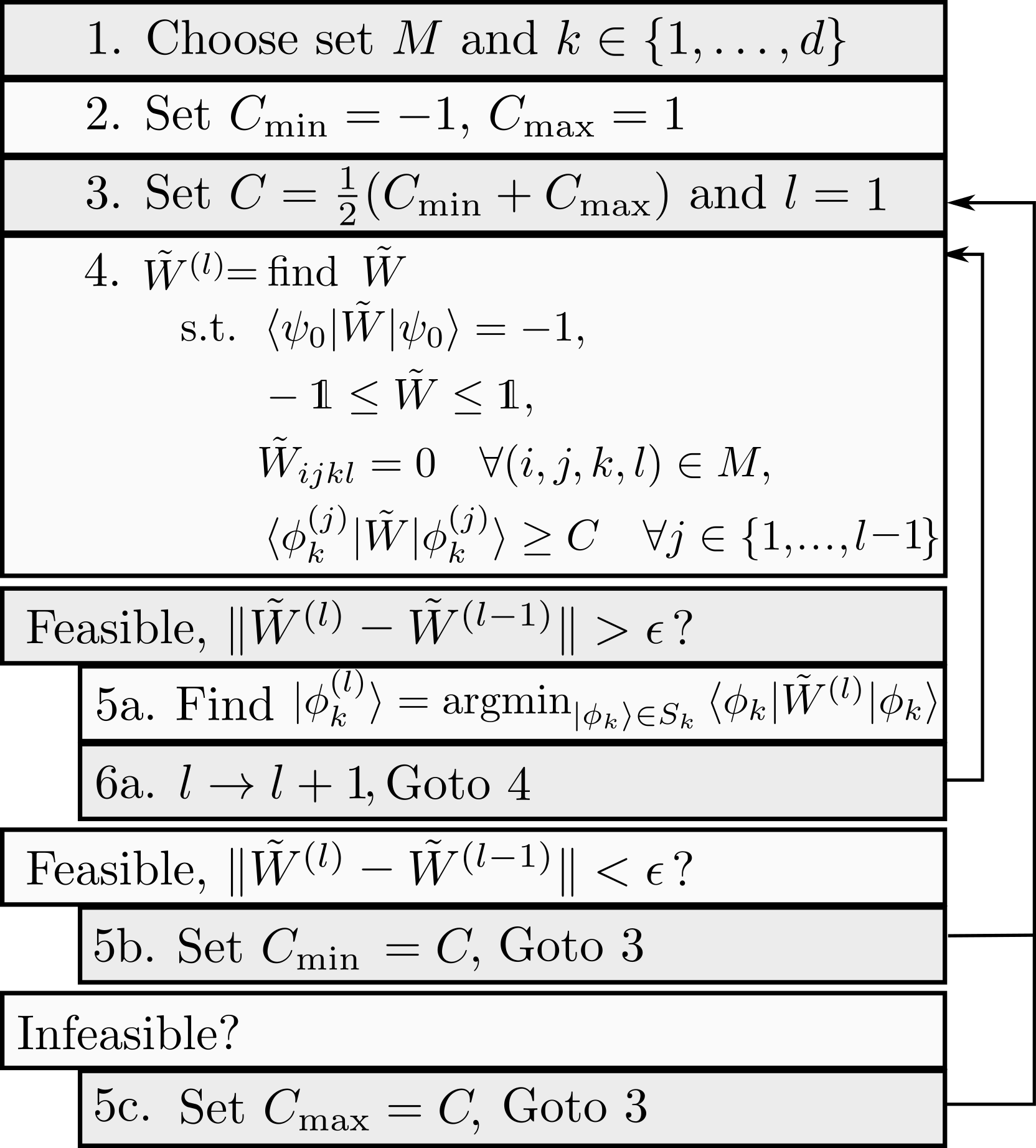}
    \caption{Pictorial representation of the algorithm, including a divide-and-conquer scheme to obtain the optimal choice of $C$. The algorithm stops, when $C_{\min}$ and $C_{\max}$ are close together, and the last feasible $\tilde{W}$ is returned.}
    \label{fig:algo}
\end{figure}

\section{Construction of Schmidt number witnesses}\label{sec:algorithm}

We now turn to the question of how to construct a Schmidt-number-$(k+1)$ witness for a given setup and a given target state $\ket{\psi_0}$ that the setup aims to prepare.\footnote{While we consider here a pure target state $\ket{\psi_0}$, the algorithm works just as well for mixed target states.} The goal is to minimize the experimental requirements to evaluate $\trace(W_k \rho_0)$, i.e., to find a witness that requires less experimental effort to be measured than the standard witness. Thus, we limit the number of required measurement settings. Motivated from the specific setup considered in Appendix~\ref{app:tm}, we restrict ourselves to projective measurements in the standard basis. To that end, we write our witness as $W_k=\sum_{i,j,k,l=0}^{d-1} W_{ijkl} \ketbraa{ij}{kl}$ and denote the set of the indices of those coefficients which are experimentally accessible by $M \subset \{(i,j,k,l)\,:\,0\leq i,j,k,l < d\}$. A suitable choice of $M$ that reflects the experimental capabilities of course depends on a specific experimental setup, and the quality of the obtained witness will depend on $M$.

As a first candidate for our witness, we start by formulating the following semi-definite program:

\opti{find}{}{\tilde{W}\label{eq:sdpstep1}}{\tilde{W}^{\dagger} = \tilde{W},\nonumber;\braket{\psi_0|\tilde{W}|\psi_0} = -1,\nonumber;-\one \leq \tilde{W} \leq \one,\nonumber;\tilde{W}_{ijkl} = 0\quad\forall (i,j,k,l)\notin M.\nonumber}

The result of this program is an operator $\tilde{W}^{(1)}$ that has as a negative eigenvalue of $-1$ corresponding to the state $\ket{\psi_0}$, which is the state we want to detect. The constraint in the last line restricts the operator to contain only terms that our measurement apparatus can measure. Note that we introduce the upper bound of $\tilde{W}^{(1)} \leq \one$ to ensure a bounded result. Of course, $\tilde{W}^{(1)}$ lacks the property of having positive expectation value w.r.t. states of Schmidt number $k$. To implement that, we first fix $k$ and numerically minimize $\braket{\phi_{k}|\tilde{W}^{(1)}|\phi_{k}}$ over pure states $\ket{\phi_{k}}$ of Schmidt rank bounded by $k$. This is achieved by explicitly parameterizing the Schmidt bases and coefficients and using a gradient descent \cite{boyd2003subgradient} or a see-saw algorithm \cite{spall2012cyclic}. The latter fixes all but one of the constituents, optimizing only one of them at a time, and it usually converges within a few cycles (one cycle being an optimization of each of the constituents), keeping the total runtime very short for $k\leq d \leq 8$.  We use this optimized (but possibly not optimal) state $\ket{\phi_{k}^{(1)}}$ to add another constraint to the SDP in Eq.~(\ref{eq:sdpstep1}), namely
\begin{align}\label{eq:constraintC}
    \braket{\phi_{k}^{(1)} | \tilde{W} | \phi_{k}^{(1)}} \geq C,
\end{align}
where $C\in (-1,1]$, as to separate the state $\ket{\phi_{k}^{(1)}}$ from the target state $\ket{\psi_0}$ as much as possible. This threshold value of $C$ is an input to the algorithm and in general cannot be chosen equal to $0$, as the constraint $\tilde{W} \leq \one$, introduced to make the optimization bounded, constrains the spectrum of $\tilde{W}$. Therefore, $C$ should be chosen as large as possible without rendering any of the SDPs in the algorithm infeasible. In practice, we use a divide-and-conquer scheme to obtain an optimal choice of $C$: We begin with a lower bound $C_{\min} = -1$, which is guaranteed to yield a feasible operator $\tilde{W}$, and an upper bound $C_{\max} = 1$, which probably renders the SDP infeasible at some point. We then choose $C=\frac12(C_{\min} + C_{\max}) = 0$, which stays fixed over the course of the algorithm until it stops as described below, after which we update the boundaries accordingly.

Having added the constraint in Eq.~(\ref{eq:constraintC}), we rerun the SDP to obtain a new candidate witness $\tilde{W}^{(2)}$, minimize again numerically the overlap with Schmidt rank $k$-states, yielding the state $\ket{\phi_{k}^{(2)}}$. We add the constraint $\braket{\phi_{k}^{(2)} | \tilde{W} | \phi_{k}^{(2)}} \geq C$ with the same $C$ as before to the SDP, and repeat the whole process, until 
\begin{itemize}
    \item either, the SDP turns infeasible at some point. In this case, the threshold $C$ was chosen too large and has to be reduced by setting $C_{\max} = C$. We then rerun the whole algorithm.
    \item or, the series of SDP converges to some $\tilde{W}^{(\infty)}$, for which no more states $\ket{\phi_{k}}$ with expectation value below $C$ can be found. In this case, $C_{\min}$ can be increased to $C$ and the algorithm rerun again.
\end{itemize}
The algorithm is described compactly in Fig.~\ref{fig:algo} and illustrated graphically in Fig.~\ref{fig:sets2}.
Note that the algorithm usually stops after few iterations, when $C_{\min}$ and $C_{\max}$ are sufficiently close together, providing an optimal choice of $C$.

The result of the algorithm is an operator $\tilde{W}^{(\infty)}$, which probably has the property of $\braket{\phi_k|\tilde{W}^{(\infty)}|\phi_k} \geq C_k$ for all Schmidt rank $k$ states $\ket{\phi_k}$. However, this is not guaranteed yet, as the optimization over these states within the algorithm is just numerical and there is no way to be sure that the true minimum was found. The proof that this property really holds has to be done analytically after obtaining a promising candidate. 
\begin{figure}
    \centering
    \includegraphics[width=1.0\columnwidth]{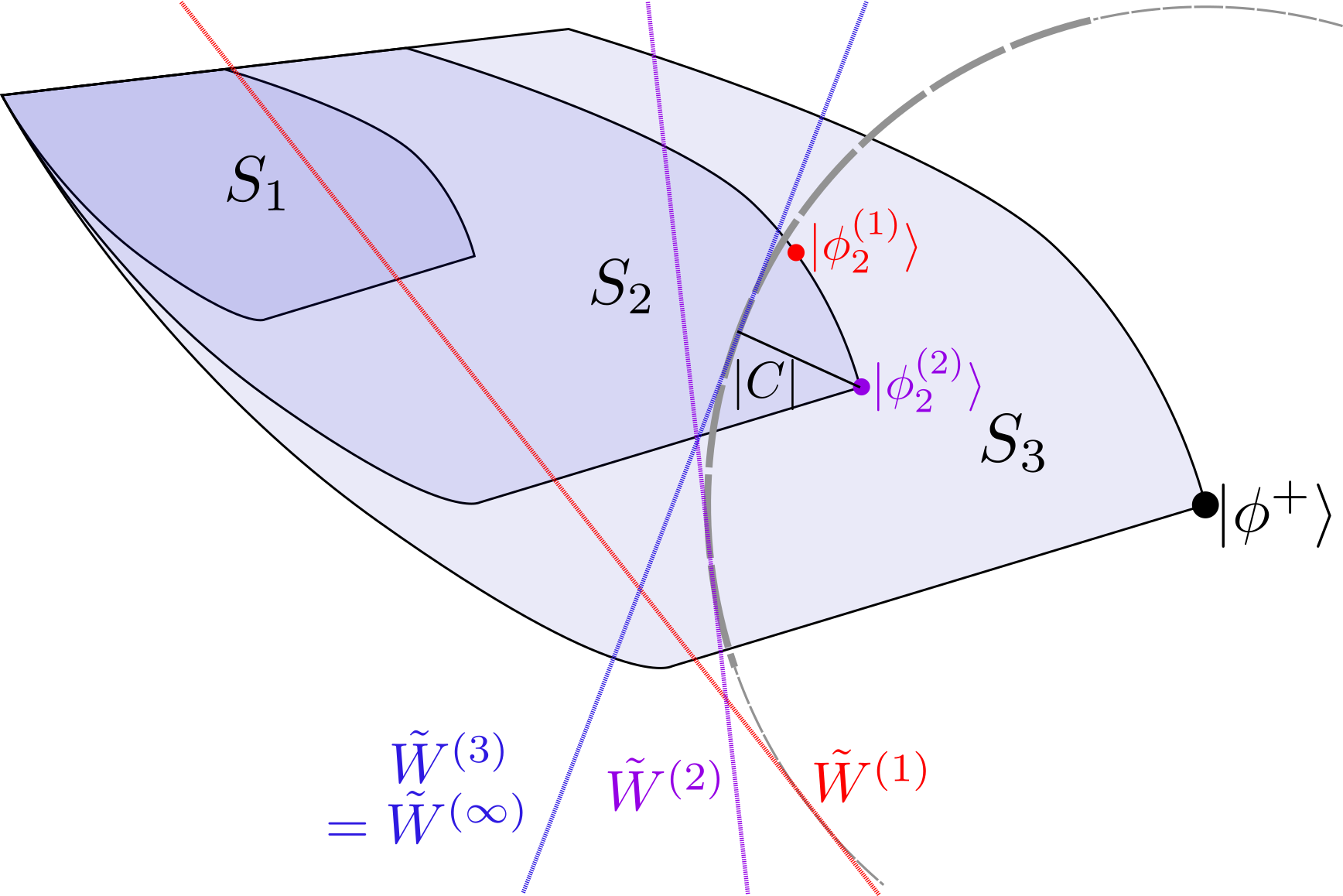}
    \caption{Illustration of the iterative algorithm described in the main text for $d=3$, $k=2$ and $\ket{\psi_0} = \ket{\phi^+}$. Initially, $\tilde{W}^{(1)}$ is found by the SDP in Eq.~(\ref{eq:sdpstep1}). Due to the constraint $\braket{\phi^+|W|\phi^+}=-1$, all $\tilde{W}^{(i)}$ must be tangent to the circle of radius 1 around the target state. For $\tilde{W}^{(1)}$, we then find the pure state $\ket{\phi_2^{(1)}}$ of Schmidt rank 2 that is furthest away from it, and add the constraint in Eq.~(\ref{eq:constraintC}) for it. After three iterations, no more such states can be found and $\tilde{W}^{(\infty)}$ is shifted to a proper witness candidate via Eq.~(\ref{eq:WtildetoWk}).}
    \label{fig:sets2}
\end{figure}
Finally, $\tilde{W}^{(\infty)}$  can be shifted to yield a proper Schmidt-number-$(k+1)$ witness via 
\begin{align}\label{eq:WtildetoWk}
    W_k = \frac1{1+ C}\tilde{W}^{(\infty)} - \frac{C}{1+C}\one_{d^2}.
\end{align}
This ensures the property $\trace(W_k\rho_k)\geq 0$ for all $\rho_k\in S_k$, as $\trace(W_k\rho_k) = \frac1{1+C}\trace(\tilde{W}^{(\infty)}\rho_k) - \frac{C}{1+C}\geq \frac{C}{1+C} - \frac{C}{1+C} \geq 0$. 

\section{Finding a witness that requires few measurement settings}\label{sec:setup}

We now apply the algorithm to find a witness that requires less measurements than the standard witness. For many experimental setups, it is reasonable to assume that the number of required measurement settings scales linearly with the number of matrix elements that are measured. Thus, we try different sets of $M$ of available coefficients in the witness, where the size of $M$ scales linearly in $d$. It turns out that the choice 
\begin{align}
    M &= \{(0,0,j,j)~|~0\leq j < d\} \cup \{(j,j,0,0)~|~0< j < d\} \cup \nonumber \\
    & \cup \{(0,j,0,j)~|~0< j < d\}\cup \{(j,0,j,0)~|~0< j < d\}\cup \nonumber \\
    & \cup\{(j,j,j,j)~|~0< j < d\}
\end{align}
yields good results while keeping the number of measurements in the order of $\mathcal{O}(d)$. We run the algorithm presented in the last section for $d=3,4,5$ and $6$ and different $k$, and obtain the following candidate for the witness, which will be shown to be a proper witness below:
\begin{align}\label{eq:shiftedWtilde}
    W_k = \frac1{1-\vert C\vert_k}\tilde{W} + \frac{\vert C\vert_k}{1-\vert C\vert_k}\one,
\end{align}
with 
\begin{align}\label{eq:Ck}
    \vert C\vert_k = \sqrt{\frac{d^2-4d+4k}{d^2}}
\end{align}
and 
\begin{align}\label{eq:Wtilde}
    \tilde{W} &= \left(1-\frac2d\right)\ketbra{00} + \delta_{d,3}\sum_{i=1}^{d-1}(\ketbraa{0i}{0i} + \ketbraa{i0}{i0}) - \nonumber \\
    &-\frac2d\sum_{i=1}^{d-1}(\ketbraa{00}{ii} + \ketbraa{ii}{00}) - \left(1-\frac2d\right)\sum_{i=1}^{d-1} \ketbra{ii}.
\end{align}
The fact that the algorithm yields the same $\tilde{W}$ for each choice of $k$, where only the thresholds $\vert C\vert_k$ vary, allows to run a single experiment to measure the overlap with $\tilde{W}$ and deduce a lower bound on the Schmidt number from that. 

The values $\vert C\vert_k$ correspond to the absolute value of the likely minimal overlap of $\tilde{W}$ with states $\ket{\phi_k}$ of Schmidt number $k$, which, according to the numerical minimization, are achieved by states of the form
\begin{align}\label{eq:specificfamilyk}
    \ket{\phi_k(\alpha)} = \alpha\ket{00} + \sqrt{\frac{1-\alpha^2}{k-1}} \sum_{i=1}^{k-1} \ket{ii}.
\end{align} Minimizing over this family of states yields $\min_{\alpha} \braket{\phi_k(\alpha)|\tilde{W}|\phi_k(\alpha)} = -\vert C\vert_k$. Note that this could be non-optimal and provides just an upper bound on the correct value of $C_k$. The last step is to show that this is indeed optimal, implying that the obtained candidate is actually a proper witness. To show this, we deduce that the lower bound from the dual SDP approximation using the sets $S_k^R$ coincides with the upper bound of $\vert C\vert_k$.
\begin{table}[t]
    \centering
    \begin{tabular}{r||r|r}
         $d=4$ & $\vert C\vert_k$ & $\vert C\vert_k^R$ \\
         \hline
         \hline
         $k = 1$ & 0.500 & 0.530 \\
        $k = 2$ & 0.707 & 0.715 \\
        $k = 3$ & 0.866 & 0.866 \\
        $k = 4$ & 1.000 & 1.000 \\
        \hline
        \rule{0pt}{4ex}
        $d=7$ & $\vert C\vert_k$ & $\vert C\vert_k^R$ \\
         \hline
         \hline
         $k = 1$ & 0.714 & 0.734 \\
        $k = 2$ & 0.769 & 0.772 \\
        $k = 3$ & 0.821 & 0.821 \\
        $k = 4$ & 0.869 & 0.869 \\
        $k = 5$ & 0.915 & 0.915 \\
        $k = 6$ & 0.958 & 0.958 \\
        $k = 7$ & 1.000 & 1.000 \\
    \end{tabular}
    \begin{tabular}{r||r|r}
         $d=11$ & $\vert C\vert_k$ & $\vert C\vert_k^R$ \\
         \hline
         \hline
         $k = 1$ & 0.818 & 0.825 \\
        $k = 2$ & 0.838 & 0.839 \\
        $k = 3$ & 0.858 & 0.858 \\
        $k = 4$ & 0.877 & 0.877 \\
        $k = 5$ & 0.895 & 0.895 \\
        $k = 6$ & 0.914 & 0.914 \\
        $k = 7$ & 0.932 & 0.932 \\
        $k = 8$ & 0.949 & 0.949 \\
        $k = 9$ & 0.966 & 0.966 \\
        $k = 10$ & 0.983 & 0.983 \\
        $k = 11$ & 1.000 & 1.000
    \end{tabular}
    \caption{The conjectured optimal threshold values $\vert C\vert_k$ from numerical minimization over Schmidt rank $k$ states, and the proven thresholds $\vert C\vert_k^R$ for $\tilde{W}$ in Eq.~(\ref{eq:Wtilde}) using the outer approximations $S_k^R$ and the dual of the SDP optimization: Whenever measuring an expectation value $\trace(\tilde{W}\rho) < -\vert C\vert_k^R$, the state $\rho$ has at least Schmidt number $k+1$. Note that for $k=1$, a different proof exists that establishes that a violation of $\vert C\vert_1$ instead of $\vert C\vert_1^R$ suffices to detect entanglement.}
    \label{tab:thresholds}
\end{table}
\begin{proposition}
For $k\neq 2$ and $d\geq4$, the witness in Eq.~(\ref{eq:shiftedWtilde}) is a Schmidt-number-$(k+1)$ witness.
\end{proposition}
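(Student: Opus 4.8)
The plan is to verify the two defining properties of a Schmidt-number-$(k+1)$ witness in turn. The condition $\trace(W_k\rho)<0$ for some $\rho$ is immediate: a short computation gives $\braket{\phi^+|\tilde W|\phi^+}=-1$ (equivalently, $\ket{\phi^+}$ is the eigenvector of $\tilde W$ with eigenvalue $-1$), so by (\ref{eq:shiftedWtilde}), $\braket{\phi^+|W_k|\phi^+}=-1<0$. For the condition $\trace(W_k\rho_k)\geq0$ on all $\rho_k\in S_k$, note that (\ref{eq:shiftedWtilde}) together with $\vert C\vert_k<1$ (for $k<d$) makes it equivalent to $\trace(\tilde W\rho_k)\geq-\vert C\vert_k$ on $S_k$, i.e.\ to $\braket{\phi_k|\tilde W|\phi_k}\geq-\vert C\vert_k$ for every pure state of Schmidt rank at most $k$. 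I would first diagonalize $\tilde W$: for $d\geq4$ it is supported on $V=\operatorname{span}\{\ket{ii}:0\leq i<d\}$ and equals $\ketbra{u}-\ketbra{\phi^+}-\mu P_{V_1'}$ with $\mu:=\tfrac{d-2}{d}$, where $\ket{u}\perp\ket{\phi^+}$ spans the rest of $\operatorname{span}\{\ket{00},\sum_{i\geq1}\ket{ii}\}$ and $V_1'$ is the $(d-2)$-dimensional orthogonal complement of $\sum_{i\geq1}\ket{ii}$ inside $\operatorname{span}\{\ket{ii}:i\geq1\}$. Hence the only direction in which $\tilde W+\vert C\vert_k\one$ is negative is $\ket{\phi^+}$ (eigenvalue $\vert C\vert_k-1$), while the $V_1'$-block sits at $\vert C\vert_k-\mu\geq0$. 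The cases $k=1$ and $k\geq3$ then call for different tools.

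For $k=1$, where $\vert C\vert_1=\mu$ and hence $W_1=\tfrac{d}{2}\tilde W+\tfrac{d-2}{2}\one$, I would use decomposability. A direct computation shows that the partial transpose $\tilde W^{T_B}$ is block diagonal: $+\mu$ on $\ket{00}$; a $2\times2$ block $\bigl(\begin{smallmatrix}0&-2/d\\-2/d&0\end{smallmatrix}\bigr)$ on each $\operatorname{span}\{\ket{0i},\ket{i0}\}$ with $i\geq1$; $-\mu$ on each $\ket{ii}$ with $i\geq1$; and $0$ elsewhere. Therefore $W_1^{T_B}$ has spectrum $\{d-2,\ \tfrac{d}{2},\ \tfrac{d-4}{2},\ 0,\ \tfrac{d-2}{2}\}$, all nonnegative for $d\geq4$, so $W_1$ is decomposable and $\trace(W_1\rho)=\trace(W_1^{T_B}\rho^{T_B})\geq0$ for every separable (indeed every PPT) state $\rho$.

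For $3\leq k\leq d-1$ (the case $k=d$ being degenerate, as (\ref{eq:shiftedWtilde}) is then singular and no Schmidt-number-$(d{+}1)$ witness exists on $\mathbb{C}^d\otimes\mathbb{C}^d$), I would use the outer approximation $S_k\subset S_k^R$: since $\min_{\rho_k\in S_k}\trace(\tilde W\rho_k)\geq\min_{\rho\in S_k^R}\trace(\tilde W\rho)$ and, by strong duality of (\ref{eq:dualsdp}), the right-hand side equals $\max\{y:\exists\,S\geq0,\ \tilde W-(\one\otimes R_{\frac1k})(S)\geq y\one\}$, it suffices to exhibit one $S\geq0$ with $\tilde W+\tfrac1k S-(\trace_B S)\otimes\one+\vert C\vert_k\one\geq0$. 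Complementary slackness with the optimal state---symmetrized over the group of $U\otimes U$ generated by permutations of the modes $1,\dots,d-1$ together with the $A\leftrightarrow B$ swap, all of which leave $\tilde W$ invariant---dictates the rank-one ansatz $S=\sigma\ketbra{\psi}$ with $\ket{\psi}\propto\beta\ket{00}+\alpha\sum_{i=1}^{d-1}\ket{ii}$, where $\alpha,\beta$ are the Schmidt coefficients of the overlap-minimizing member of the family (\ref{eq:specificfamilyk}), explicitly $\alpha^2=\tfrac{\vert C\vert_k-\mu}{2\vert C\vert_k}$ and $\beta^2=\tfrac{\vert C\vert_k+\mu}{2\vert C\vert_k(k-1)}$, and $\sigma$ is fixed by requiring equality on the $V_1'$-block. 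Because $\ket{\psi}$ has the simple Schmidt form $\beta\ket{0}\ket{0}+\alpha\sum_{i\geq1}\ket{i}\ket{i}$, its marginal $\trace_B\ketbra{\psi}$ is diagonal, so the operator inequality splits over $V$, $\operatorname{span}\{\ket{0j}:j\geq1\}$, $\operatorname{span}\{\ket{j0}:j\geq1\}$ and $\operatorname{span}\{\ket{ij}:i\neq j,\ i,j\geq1\}$, and over $V$ it splits further into the $V_1'$-block and a single $2\times2$ block on $\operatorname{span}\{\ket{00},\sum_{i\geq1}\ket{ii}\}$. The whole positivity requirement thus collapses to a handful of scalar inequalities together with the positive semidefiniteness of one $2\times2$ block, all of which---after inserting $\mu=\tfrac{d-2}{d}$ and $\vert C\vert_k^2=\tfrac{d^2-4d+4k}{d^2}$---become elementary inequalities in $d$ and $k$ that hold precisely for $k\geq3$. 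This yields $\trace(\tilde W\rho_k)\geq-\vert C\vert_k$, hence $\trace(W_k\rho_k)=\tfrac1{1-\vert C\vert_k}(\trace(\tilde W\rho_k)+\vert C\vert_k)\geq0$.

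The expected main obstacle is the explicit dual certificate for $k\geq3$: the naive guess $S\propto\ketbra{\phi^+}$ does not work, because its maximally mixed marginal $\tfrac1d\one$ subtracts too much on the $(d-2)$-dimensional eigenspace $V_1'$ where $\tilde W=-\mu$; one is forced to tilt $\ket{\psi}$ away from $\ket{\phi^+}$ (placing less weight on the modes $1,\dots,d-1$) and to choose $\sigma$ just right, and the simultaneous control of the $2\times2$ block and of the scalar bounds is the computational heart of the argument. This is also exactly where $k=2$ fails: there $\min_{\rho\in S_2^R}\trace(\tilde W\rho)=-\vert C\vert_2^R<-\vert C\vert_2$ (Table~\ref{tab:thresholds}), so no feasible $S$ reaches the value $-\vert C\vert_2$ and the $S_k^R$-relaxation cannot certify the conjectured threshold---a tighter outer approximation of $S_2$ would be needed---which is why $k=2$ is excluded. (For $k=1$ the relaxation is likewise not tight, $\vert C\vert_1^R>\vert C\vert_1$, but the partial-transpose argument above sidesteps it.)
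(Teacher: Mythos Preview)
Your proposal is correct and, for the main case $3\le k\le d-1$, essentially coincides with the paper's proof: both exhibit the same rank-one dual certificate $S=\ketbra{x}$ with $\ket{x}\propto a\ket{00}+b\sum_{i\ge1}\ket{ii}$ and $a^2/b^2=(\vert C\vert_k+\mu)/[(k-1)(\vert C\vert_k-\mu)]$, and then verify $\tilde W-(\one\otimes R_{1/k})(S)\ge -\vert C\vert_k\one$ by computing the spectrum block by block. Your write-up is a bit more structural---you motivate the ansatz via complementary slackness and the $(U\otimes U)$-symmetry of $\tilde W$, and you organise the positivity check by the invariant subspaces---whereas the paper simply reports the certificate as ``suggested by numerics'' and lists the five eigenvalues $\lambda_1,\dots,\lambda_5$ directly; but the underlying computation is the same.

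The genuine difference is at $k=1$. The paper handles this case by a direct Lagrange-multiplier optimisation over pure product states. You instead observe that $W_1^{T_B}\ge0$ for $d\ge4$ (spectrum $\{d-2,\ d/2,\ (d-4)/2,\ 0,\ (d-2)/2\}$), hence $W_1$ is a decomposable witness and is automatically nonnegative on all PPT---in particular all separable---states. This is a cleaner and shorter route than the paper's, and it even yields the slightly stronger statement that $W_1$ is nonnegative on the PPT set rather than merely on $S_1$; the paper's Lagrange approach, on the other hand, makes the optimal product state explicit.
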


\begin{proof}
We have to show that $\trace(\tilde{W}\rho_k) \geq C_k$ for all states $\rho_k\in S_k$, i.e., states with $\text{SN}(\rho_k) \leq k$.

Recall that a proper lower bound on
the minimal overlap is given by optimizing over the outer approximation $S_k^R$ of $S_k$: If
\opti{min}{\rho}{\trace(\tilde{W} \rho)}{\rho^\dagger = \rho, \trace(\rho) = 1, \rho \geq 0,;(\one \otimes R_{\frac1k})(\rho) \geq 0,}
where $R_p$ is given in Eq.~(\ref{eq:reductionmap}), is greater than or equal $C_k$, then this implies our claim. Thus, we effectively optimize over states in $S_k^R$. To solve this optimization, we convert this SDP into its dual form, given by Eq.~(\ref{eq:dualsdp}). Even though any feasible choice of $S$ provides a proper lower bound for $C_k$ such that $W_k$ is a witness, we aim to find the best choice, and numerically checking the optimal solutions for low values of $d$ seems to indicate that for $k\geq 2$, the optimal $S$ is given by $S=\ketbra{x(a,b)}$ with the unnormalized $\ket{x(a,b)} = a\ket{00} + b\sum_{i=1}^{d-1}\ket{ii}$ with $a,b\in \mathbb{R}$. As the dual SDP maximizies the smallest eigenvalue of  $\tilde{W} - (\one \otimes R_{\frac1k})(S)$, we first calculate the eigenvalues of this operator explicitly (up to multiplicities):
\begin{align}
    \lambda_1 &= -a^2,  \lambda_2 = -b^2, \lambda_3 = -b^2-\frac{d-2}{d},\\
    \lambda_{4,5} &= \frac{1}{2dk}[p(a,b) \pm \sqrt{p(a,b)^2 - 4q(a,b)}]),
\end{align}
with 
\begin{align}
    p(a,b) &= d[(1-k)a^2+(d-1-k)b^2], \\
    q(a,b) &= dk[d(k-d)a^2b^2 + (d-2)(k-1)a^2 +\nonumber \\
    &+(d-2)(d-1-k)b^2 + 4(d-1)ab-dk].
\end{align}
The minimal value of these $\lambda_i$ can only be one of the three eigenvalues $\lambda_1=-a^2, \lambda_3 = -b^2-\frac{d-2}d$ and $\lambda_5=\frac{1}{2dk}[p(a,b) - \sqrt{p(a,b)^2 - 4q(a,b)}]$.
For $k=1$ and $k=2$, the maximum minimal value of these is given when the three values coincide, which yields as the optimal value the root of an even, quartic polynomial that can be found efficiently, but yields a value larger than the one in Eq.~(\ref{eq:Ck}) (see Table~\ref{tab:thresholds}). For $k\geq3$, examining the numerical solutions for small $d$ suggests the optimal solution
\begin{align}
    a &= \frac{1}{\sqrt{k-1}}\sqrt{\vert C\vert_k + \frac{d-2}{d}},\\
    b &= \sqrt{\vert C\vert_k - \frac{d-2}{d}},
\end{align}
where $\vert C\vert_k$ is given in Eq.~(\ref{eq:Ck}).
This choice implies (after a lengthy, but elementary calculation) that $\lambda_3 = \lambda_5$, and for $d\geq3$ and $k\geq3$, it follows that
\begin{align}
    \lambda_1 - \lambda_3 &= \frac1{d(k-1)}\left(2-d+(k-2)\sqrt{d^2-4d+4k}\right) \nonumber \\
    & \geq\frac1{d(k-1)}(2-d+\sqrt{d^2-4d+4}) = 0,
\end{align}
thus $\lambda_3 \leq \lambda_1$, yielding the lower bound of $\lambda_3 = -b^2-\frac{d-2}{d} = -\vert C\vert_k$ to the optimization problem. As this lower bound coincides with the upper bound values from the minimization over the specific family of states in Eq.~(\ref{eq:specificfamilyk}), it must be optimal.

\begin{figure}
    \centering
    \includegraphics[width=0.8\columnwidth]{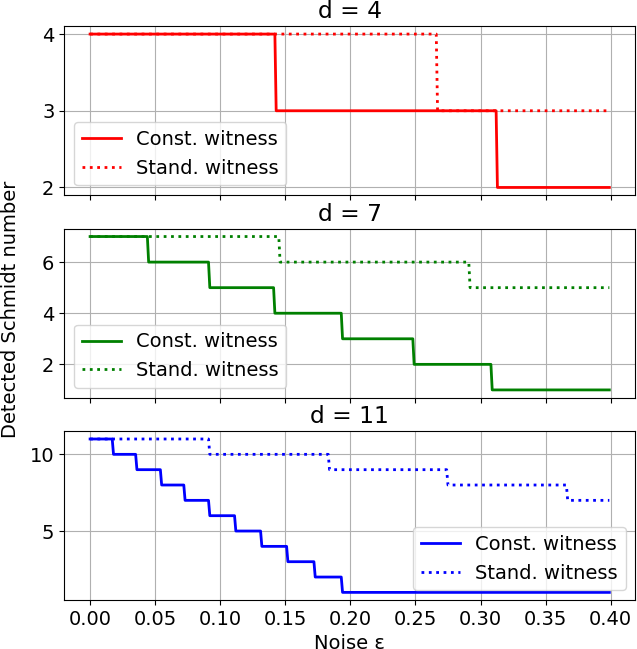}
    \caption{Detection power of the constructed witness in Eq.~(\ref{eq:Wtilde}) and the standard witness in Eq.~(\ref{eq:standardwitness}) for different dimensions as a function of white noise added to the state.}
    \label{fig:noise}
\end{figure}
While for $k=1$ the bound from the SDP is not optimal, it is possible to carry out the optimization over pure product states in this case using Lagrange multipliers, and one recovers the value $-\vert C\vert_1$. This leaves the choice of $k=2$ being the only case where we could not prove the numerical result to be a proper witness.
\end{proof}
While the proof does not work for $k=2$, it can still give bounds for these cases, meaning that a proper (probably non-optimal) witness can be built.

In Table~\ref{tab:thresholds}, we list the values of $\vert C\vert_k$ compared to $\vert C\vert_k^R$ for different values of $d$. It shows that for $k=1$ and $k=2$, the difference between the reduction map value and the conjectured $\vert C\vert_k$ is very small.

We stress that the precise number of measurement settings required to measure this witness depends on the specific experimental setup. In Appendix~\ref{app:tm}, we show that for a specific setup using photon temporal modes \cite{brecht2015photon}, the evaluation of the standard witness requires $2d^2-d$ measurement settings, whereas the newly constructed witness requires only $5d-4$ of them.

While our witness requires less measurement settings, it comes with the price of a reduced noise robustness. We model this by determining the maximum amount of white noise that can be added to the maximally entangled state before the witness fails to detect its $k$-dimensional entanglement. In particular, the noisy state is given by $\rho(\epsilon) = (1-\epsilon)\ketbra{\phi^+} + \epsilon \frac{\one_{d^2}}{d^2}$. The result is displayed in Fig.~\ref{fig:noise} for dimension $d=4$, $d=7$ and $d=11$. This indicates that our witness is
particularly useful in the low-noise regime. Finally, we note that we checked numerically that allowing for more measurement settings, i.e., increasing the size of the set $M$, increases the noise robustness and eventually yields the standard witness, as soon as the $d^2$ settings needed for its evaluation are included in $M$.  

\section{Conclusions}\label{sec:conclusions}

We presented an iterative algorithm that quickly generates candidates for Schmidt-number-$(k+1)$ witnesses which require only a number of measurement settings scaling linearly with the dimension. We applied the algorithm to find a witness candidate that requires knowledge of only $\mathcal{O}(d)$ of the matrix elements. To show that the numerical candidate is a witness, we employed a semi-definite program that gives lower bounds on the minimal overlap with states of certain Schmidt number, which in almost all cases turned out to be optimal, as it coincided with the upper bound obtained before.

While we applied the algorithm to a specific case, let us remark that it can be applied to many other experimental setups as well. The choice of setting certain coefficients of the witness to zero, for example, can be replaced by any set of linear or semi-definite constraints. With this, it could be fruitful to extend recent proposals for generating two-qudit entangled states. For instance, one could generate Schmidt number witnesses to certify entanglement dimensionality in the setup proposed in Ref.~\cite{puentes2020high}. 

\acknowledgments
We thank Laura Serino, Sophia Denker and Otfried Gühne for fruitful discussions. The authors acknowledge support by the QuantERA project QuICHE and the German Ministry of Education and Research (BMBF
Grant No. 16KIS1119K).

\appendix

\section{Number of measurements in photonic temporal mode setups}\label{app:tm}

As a concrete example, we consider the setup described in Ref.~\cite{brecht2015photon}. There, the produced density matrix is parameterized as
\begin{align}
    \rho = \sum_{ijkl=0}^{d-1} C_{ijkl} \ketbraa{ij}{kl},
\end{align}
and the measurements using quantum pulse gates are yielding the quantities
\begin{align}\label{eq:measurements}
    A_{c_A,c_B,\phi_A,\phi_B}^{m,n,p,q} & = c_A^2c_B^2C_{mpmp} +  s_A^2s_B^2C_{nqnq} + \nonumber \\
    &+ c_A^2s_B^2C_{mqmq} + s_A^2c_B^2C_{npnp} + \nonumber \\
    & + 2\operatorname{Re}[e^{i\phi_A}c_As_A(c_B^2C_{mpnp}+s_B^2C_{mqnq}) + \nonumber \\
    & \phantom{+2Re} + e^{i\phi_B}c_Bs_B(c_A^2C_{mpmq} + s_A^2C_{npnq}) + \nonumber \\
    & \phantom{+ 2\operatorname{Re}} + c_As_Ac_Bs_B (e^{i(\phi_A+\phi_B)}C_{mpnq}+ \nonumber \\
    & \phantom{2Re+c_as_ac_bs_Bcc}+e^{i(\phi_A-\phi_B)}C_{mqnp})]
\end{align}
with $c_A^2+s_A^2 = 1 = c_B^2+s_B^2$ and $0\leq m,n,p,q < d$. Choosing $c_{A,B} \in \{0, 1/\sqrt{2}, 1\}$ and $\phi_{A,B} \in \{0, \pi/2\}$ together with  $m,n,p,q$ allows to reconstruct all matrix elements. 

In order to measure the standard witness in Eq.~(\ref{eq:standardwitness}), the fidelity
\begin{equation}\label{eq:fidelityphip}
F = \braket{\phi^+|\rho|\phi^+} = \frac1d \sum_{i,j=0}^{d-1} C_{iijj}
\end{equation}
has to be measured. This requires obtaining $C_{iijj}$ for all $0\leq i,j < d$. The elements $C_{iiii}$ can be directly obtained from a measurement in Eq.~(\ref{eq:measurements}) with parameters $m=p=i$, $c_A = c_B = 1$ and the rest of the parameters arbitrary. The element $C_{iijj}$ with $i\neq j$ can be obtained by evaluating sums of eight measurements $A^{m,n,p,q}_{1/\sqrt{2},1/\sqrt{2},\phi_A,\phi_B}\equiv A^{m,n,p,q}_{\phi_A,\phi_B}$ as follows:
\begin{align} 
    2\operatorname{Re}[C_{iijj}] &= \phantom{-}A^{i,j,j,i}_{0,0} + A^{i,j,j,i}_{\pi,\pi} - A^{i,j,i,j}_{\pi/2,\pi/2} - A^{i,j,i,j}_{3\pi/2,3\pi/2}, \nonumber\\
    2\operatorname{Im}[C_{iijj}] &= -A^{i,j,j,i}_{\pi/2,0} - A^{i,j,j,i}_{3\pi/2,\pi} + A^{i,j,i,j}_{\pi,\pi/2} + A^{i,j,i,j}_{0,3\pi/2}.\label{eq:ReIm}
\end{align}
Note that the imaginary parts in Eq.~(\ref{eq:fidelityphip}) have to cancel, as $\rho$ is hermitian. Thus, we only have to measure the real parts of $C_{iijj}$. As $\operatorname{Re}(C_{iijj}) = \operatorname{Re}(C_{jjii})$, we have to measure the $d$ elements $C_{iiii}$, as well as the $d(d-1)/2$ elements $C_{iijj}$ for $i<j$, which need four measurement settings each.
Thus, in order to measure $F$, one needs $d + 2d(d-1) = 2d^2-d$ measurement settings.

For comparison, let us calculate the number of measurement settings needed to evaluate the derived witness in Eq.~(\ref{eq:Wtilde}) of the main text. Here, we focus on the case of $d>3$, meaning that we have to obtain the $d$ coefficients $C_{iiii}$ for $i=0,\ldots,d-1$, as well as the $d-1$ coefficients $C_{00ii}$ for $i=1,\ldots,d-1$. This requires a total of $d+4(d-1) = 5d-4$ measurement settings.

\bibliographystyle{apsrev4-1}
\bibliography{cite}

\end{document}